\newtheorem{thm}{Theorem}
\newtheorem{rmk}{Remark}
\newtheorem{fact}{Fact}
\newcommand{\world}[1]{\mathbf{E}_ #1}
\renewcommand{\Pr}[1]{\mathbb{P}\left[   #1\right]}
\newcommand{\E}[1]{\mathbb{E}\left[ #1 \right]}
\newcommand{\Lrwf}{\frac{\log(1/\delta_0 - 1)}{\log(p/(1-p))}}
\newcommand{\Lrw}{L}
\newcommand{\alpp}{\left(\frac{1-p}{p}\right)}
\newcommand{\pmin}{p_\text{min}}
\title{Pessimism Traps and Algorithmic Interventions}
\titlerunning{Pessimism Traps and Algorithmic Interventions}
\author{Avrim Blum}{Toyota Technological Institute at Chicago, USA}{avrim@ttic.edu}{}{}
\author{Emily Diana}{Carnegie Mellon University, Tepper School of Business, USA}{ediana@andrew.cmu.edu}{}{}
\author{Kavya Ravichandran}{Toyota Technological Institute at Chicago, USA}{kavya@ttic.edu}{https://orcid.org/0000-0002-4133-4168}{}
\author{Alexander Tolbert}{Emory University, Departnment of Quantitative Theory and Methods, USA}{alexander.tolbert@emory.edu}{}{}
\authorrunning{A. Blum, E. Diana, K. Ravichandran, A. Tolbert}
\keywords{Pessimism trap, opinion dynamics, algorithmic interventions, subsidy, decision-making} 
\begin{document}

\maketitle

\begin{abstract}
In this paper, we relate the philosophical literature on pessimism traps to information cascades, a formal model derived from the economics and mathematics literature. A pessimism trap is a social pattern in which individuals in a community, in situations of uncertainty, copy the sub-optimal actions of others, despite their individual beliefs. This maps nicely onto the concept of an information cascade, which involves a sequence of agents making a decision between two alternatives, with a private signal of the superior alternative and a public history of others' actions. Key results from the economics literature show that information cascades occur with probability one in many contexts, and depending on the strength of the signal, populations can fall into the incorrect cascade very easily and quickly. Once formed, in the absence of external perturbation, a cascade cannot be broken -- therefore, we derive an intervention that can be used to nudge a population from an incorrect to a correct cascade and, importantly, maintain the cascade once the subsidy is discontinued. We extend this to the case of multiple communities, each of which might have a different optimal action, and a government providing subsidies that cannot discriminate between communities and does not know which action is optimal for each.  We study this both theoretically and empirically.
\end{abstract}

\section{Introduction}

Studying decision-making under uncertainty has interested scholars in several disciplines, such as psychology, economics, and philosophy. In economics, specifically in statistical discrimination, scholars such as \cite{coateLoury} have studied how individuals make decisions, particularly in marginalized communities where societal structures and perceived barriers can impact their choices. This often leads to self-censorship or a withdrawal from available opportunities due to a pessimistic outlook. The concept of pessimism has also been explored in philosophy, specifically in recent work by \cite{morton2022resisting}. \cite{morton2022resisting}'s research examines the dilemmas faced by individuals belonging to marginalized groups when presented with evidence that diminishes their chances of success. As a result, they may rationally choose to redirect their efforts elsewhere to pursue other less valuable alternatives. However, this decision only reinforces the pessimistic mindset that influenced it, perpetuating a cycle of limited aspirations and self-doubt. \cite{morton2022resisting} criticizes the simplistic solution of turning towards optimism and proposes a more critical and situational application of optimistic beliefs.
\cite{morton2022resisting} characterizes \textit{pessimism traps} as follows: 
\begin{quote}
A \textbf{pessimism trap} \dots is
meant to capture how negative beliefs about one’s likelihood of success can play a role in agents pursuing less risky, modest ends instead of ambitious ones, thereby further entrenching the negative evidence that the agent herself and other agents in a similar position face. The central idea, however, has broader applicability.\dots
for this paper, I will focus on those cases in which poverty, prejudice, and discrimination play a role in providing agents with the sort of evidence that would \textit{prima facie} make it \textbf{rational} for them to arrive at the pessimistic beliefs that play a role in thwarting their ambition.
\end{quote}

Our paper extends Morton's qualitative description of pessimism traps (summarized in Appendix~\ref{appendix:features}) with mathematical formalism and studies algorithmic decision-making within this formalism.  We explore the theoretical model of \textit{information cascades} (also referred to as ``herding'' in the opinion dynamics literature \cite{mossel1, pathological}) to shed light on decision-making under uncertainty and pessimism in such contexts. Using this model, we propose interventions to break pessimism traps and redirect the population into favorable states.

\subsection{Our Contributions}

In this paper, we study pessimism traps as conceptualized by Morton and offer interventions to shift communities out of these traps. Our first major contribution is to link the concept of a pessimism trap to the opinion dynamics literature. We do so by interpreting a single-dimensional information cascade model as a decision sequence in a community. We extend this in Section~\ref{sec:k-group} to consider $k$ parallel information cascades representing the decision processes of $k$ independent communities.

Our next major contribution is an intervention for the single community setting that sustainably shifts communities out of pessimism. We consider a subsidy enacted by the government to incentivize certain behavior in an agent. Two natural ideas fail: first, suppose we subsidize the ambitious action by a large fixed amount. While this intervention incentivizes agents to act ambitiously in the short term, once the subsidy program ends, subsequent agents will ascribe prior agents' ambition entirely to the subsidy they received and fall back into pessimism traps. Secondly, subsidizing by too small an amount fails to incentivize the ambitious action over the moderate one at all. 
Thus, we derive the precise size of the subsidy that will incentivize those who were already leaning toward ambition while not moving those who were anyway not considering the ambitious choice. Herein lies our first surprising insight -- the non-monotonicity associated with the effect of the size of subsidy.

Next, we study $k$ communities, each behaving as above. The government still intervenes but must act impartially with respect to community membership. 
Each of these groups may have a different optimal choice among the two options, and the government (a) does not know which choice is optimal for each group and (b) must be blind to  community membership in providing subsidies. We show that in this case, we can construct a distribution $\mathcal{D}$ such that if the government draws the subsidy randomly from that distribution, then eventually each community will shift toward what is the optimal end for them. This intervention relies crucially on the previous insight: if the provided subsidy is in the ``just right'' range discussed above for an agent, the agent's action will shift their community toward optimal action. However, if it is too large or too small (outside of that range), the community will remain in the same state. Thus, provided the distribution assigns at least a minimum probability to the ``just right'' range for a fixed community, that group will eventually settle into their optimal action. Therefore, our second main insight is that since we can guarantee (1) the existence of a range of subsidy values that shift the community toward the correct action for that community and (2) a single community faces no {\em negative} effect when provided a subsidy outside of that range, the government can use randomization to shift {\em all} groups into optimal behaviors, even without knowing what is optimal for any given community.

Finally, we verify the effectiveness of our proposed intervention in the one-group case via experiments. Our experiments show that the intervention we develop is indeed successful in shifting sequential decisions into optimism. Further, we confirm that the required budget for the intervention is tractable. Our experiments provide additional confirmation of the benefits of our intervention in the simple model that we study.

We view our work as a step forward in providing a theoretical model in which to study pessimism traps and, more broadly, in developing empowering interventions for marginalized communities. While our 
model is necessarily stylized, our major insights could help guide interventions in more complex settings. 

In summary, our contributions are:
\begin{enumerate}
    \item Linking the opinion dynamics and pessimism traps literatures by studying pessimism trap formation in the information cascade model.
    \item Identifying an intervention that sustainably shifts a community of agents away from pessimism traps. 
    \item Extending this result into a setting where there are $k$ different groups and where the government knows neither which the correct action is for a group, nor how close they are to escaping the trap and showing the power of randomization in this setting.
    \item Corroborating the success of our theoretical interventions experimentally.
\end{enumerate}

The remainder of our paper is organized as follows. First, we briefly survey work from a diverse array of fields related to our work.
Next, we present the formal mathematical model in which we study pessimism traps along with some preliminary facts about them, and we reflect on the strengths and weaknesses of this modeling approach. In Section~\ref{sec:subsidy}, we introduce and analyze our intervention to shift a single group toward optimism. We then extend this in Section~\ref{sec:k-group} to the setting with $k$ groups each of whom has an unknown optimal action.  Finally, we provide experiments that show the success of our proposed interventions, substantiating our theoretical results.

\subsection{Related Work}

Here, we provide a brief survey of related works that contextualize our approach to studying pessimism traps and interventions aimed at promoting optimism.  

As quoted above, \cite{morton2022resisting} formalizes the notion of pessimism traps, building upon several empirical characterizations from scholars studying the phenomenon in the field of education \cite{cohen_2006_intervention}. Morton highlights the importance of {\em belief}, which is why we focus our modeling and interventions on shifting beliefs in a sustainable manner.

Much work in the literature studies the connection between individual behavior and community-wide effects. This includes the literature on opinion dynamics and social learning, which are surveyed in \cite{sirbu_opinion_2017} and \cite{chamley_rational_2004}. They discuss both simple and more complex models for how people's beliefs vary in relation to the beliefs of those with whom they interact. \cite{acemoglu_opinion_2011} studies both Bayesian and non-Bayesian models for how agents update their beliefs, investigating consensus and asymptotic learning of state (i.e., what are the true beliefs in the world). On the behavioral economics side, \cite{kahneman1974judgment} study the psychological reasons behind why individuals might act in a way that disregard their private information.

Since individual behavior can impact community-wide outcomes, as studied in the works above, it is reasonable to consider ``nudging'' or intervening at the individual level. Works that have studied this include \cite{thaler2008nudge} in behavioral economics, showing how minor policy adjustments can realign individual decision-making with optimal outcomes.
 Similarly, work in theoretical computer science and game theory, such as \cite{Balcan2013CircumventingTP}, study ``nudging'' specifically in the context of equilibria, where the goal is to redirect a population from a less desirable to a more desirable equilibrium.
 
Additionally, empirical studies have supported the application of nudges in various domains. For example, \cite{johnson2012beyond} extends the discussion on the efficacy of nudges in real-world settings. Integrating nudging into public policy, particularly in health and environmental strategies, as discussed by \cite{benartzi2017should}, demonstrates how these concepts have evolved beyond theoretical discussions to practical implementations. 

Therefore, we utilize information cascade models to develop ``nudging''-style interventions aimed at combating the pessimism traps that frequently arise in marginalized communities. 
 
\cite{bikhchandani1992theory} initially framed the classic information cascade model, demonstrating how individuals, despite possessing private information, often conform to the erroneous actions of predecessors due to the strong influence of prior actions. This model has served as a baseline for exploring various dimensions of information processing within groups. Building on this foundational work,  \cite{anderson1997information} conducted laboratory experiments to observe cascade behavior in controlled settings, adding empirical evidence to the theoretical predictions. We mainly focus on the former.

\section{Preliminaries and Basic Model}
\label{sec:prelims}
We study two settings in this paper: first, we look at a single sequence of $T$ agents deciding whether to take action  $A$ or $B$; second, we will expand this to $k$ parallel sequences of agents, where the best action (between $A$ and $B$) differs group-to-group. Here, we define the signal and sequence dynamics for a single sequence; in Section~\ref{sec:k-group}, we flesh out the setting with $k$ parallel groups. We take $T$ to be finite, but our analysis extends to infinite $T$. 
\subsection{Single Sequence Model}
First, for the case of a single community (sequence) 
we assume that $A$ is uniformly the better option (that is, it is best for all agents), though this fact is unknown to the agents. When we consider $k$ parallel sequences, then each sequence will have its own better option. We let $\world{A}$ represent the event that action $A$ is correct and $\world{B}$ represent the event that action $B$ is correct. When referring to the set of these possible events, we define $\mathcal{E} \triangleq \{\world{A}, \world{B}\}$ to indicate the set of possibilities. In particular, we assume that \textit{a priori}, agents have no bias toward either action, which we model by saying that they have a common, uniform prior over $\mathcal{E}$. Formally, $\Pr{\world{A}}=\frac{1}{2}$ and $\Pr{\world{B}} = \frac{1}{2}$. Further, one action has associated with it reward 1 and the other reward 0. 
In addition to the public prior, each agent $t$ receives a private signal $s_t$ indicating that one of $A$ or $B$ is the correct action.  This signal is correct with probability $ p > \frac{1}{2}$ and incorrect with probability $1-p$. Therefore, we have $\Pr{s_t|\world{A}} = p^{\delta_{s_t,A}} (1-p)^{\delta_{s_i,B}}$ and $\Pr{s_t|\world{B}} = p^{\delta_{s_t,B}} (1-p)^{\delta_{s_i,A}}$, where \( \delta \) is the Kronecker delta function.
\newcommand{\barH}[1]{\bar{H}_{#1}}

Given this signal and the actions of preceding agents, the agent decides whether to take action $A$ or $B$. Let $h_t$ indicate the action of agent $t$, where an agent is identified by the time they act and $H_{t-1}$ be the history of actions for the first $t-1$ agents. Agent $t$ will take an action if its expected reward exceeds the expected reward of the other action.  For concreteness, we adopt the tie-breaking convention that if an agent is indifferent between the two options based on the calculated posteriors, they follow their private signal. Now, we define an information cascade.
\begin{definition}[Information Cascade]
An information cascade occurs when an agent's action does not depend on their private signal. This means that one of the following occurs:

\begin{enumerate}
\item $\Pr{\world{A}|s_t=1,H_{t-1}} > \frac{1}{2} \text{ and } \Pr{\world{A}|s_t=0,H_{t-1}} > \frac{1}{2}$

\item $\Pr{\world{A}|s_t=1,H_{t-1}} < \frac{1}{2} \text{ and }\Pr{\world{A}|s_t=0,H_{t-1}} < \frac{1}{2}$
\end{enumerate}

\end{definition}
\begin{rmk}
A cascade begins when the observed history \( H_{t-1} \) becomes sufficiently skewed towards either adoption or rejection. Therefore, enough previous agents have taken the same action to make it rational for subsequent agents to follow the trend, based on a simple Bayesian updating procedure, even if the history contradicts their private signal.
\end{rmk}

To formally capture the concept of a pessimism trap for a sequence setting, we define the conditions under which agents, influenced by the actions of their predecessors, consistently choose the inferior option $B$. This phenomenon occurs when the aggregated evidence from prior decisions leads to a bias that overrides the agents' private signals. We represent this situation mathematically as follows:

\begin{definition}[Pessimism Trap]
A pessimism trap occurs when an information cascade leads agents to consistently choose the inferior action $B$ due to an overwhelming influence of prior agents' incorrect actions. Formally, a pessimism trap is defined by the following condition:
$$\Pr{\world{B}|s_t=1, H_{t-1}} > \frac{1}{2} \text{ and }\Pr{\world{B}|s_t=0, H_{t-1}} > \frac{1}{2}$$

This indicates that despite the private signal $s_t$ suggesting action $A$ is better, history $H_{t-1}$ has led agent $t$ to believe that $B$ is better. Subsequently, agents' decisions are based purely on observed history rather than private signals. 
\end{definition}

\begin{rmk}
When computing posteriors, we assume that agents only consider actions of those up until a cascade began, since if agents are rational, they realize that no further inferential information can be obtained from individuals who do not use their signals. 
\end{rmk}

As a result, we will often be interested in $\barH{t-1},$ the history of actions taken by agents who chose an action if and only if it was their signal. We express mathematically the Bayesian updating procedure through which agents come to their posterior belief about the best action based on the history of actions before a cascade begins and then formalize the kinds of governmental / central interventions we study:
\begin{equation}
\label{eq:post}
\Pr{\world{A}|s_t, H_{t-1}} = \Pr{\world{A}|s_t, \barH{t-1}} = \frac{\Pr{s_t|\world{A}}\Pr{\barH{t-1}|\world{A}}}{\sum_{w \in \{A,B\}} \Pr{s_t|\world{w}} \Pr{\barH{t-1}|\world{w}}}
\end{equation}

\begin{definition}
    A subsidy of size $r$ toward action $a$ is a benefit provided to an agent taking said action independent of the true world. Thus, in the world where $a$ is the correct action, instead of receiving just $R$ reward for taking action $a\,,$ the agent receives $R+r\,$ reward, and in a world in which it is the incorrect action, the agent still receives $r$ reward.
\end{definition}

\textbf{Formulation as Random Walk.} The main technical tools used for analysis of the cascade and subsidy come from framing the process as a random walk. We isolate the probability with which action $A$ is taken and the probability with which action $B$ is taken, and we analyze the information cascade as taking steps on a random walk. This walk in general terminates when either an up or a down cascade has been reached. Once we introduce a subsidy, the stopping point of interest will be the up cascade.
In Appendix~\ref{appendix:random-walk}, we show how we can use the random walk formalism to derive the length of time for which the subsidy must be in place and the required budget.

\subsection{On Our Modeling Choices}
A significant contribution of our work is that we propose using the information cascade model to study pessimism traps and associated interventions. In this section, we justify our use of this model and discuss the trade-offs made.

\textbf{Justification:}
A natural question is: why choose this particular level of abstraction, i.e., why model the pessimism trap as hinging in this way on decisions made by previous agents? Several works have empirically characterized this phenomenon in the context of education, building on which Morton gives an epistemic characterization. At its core, as conceptualized by Morton, a pessimism trap occurs as a result of people's beliefs (about the world, about the rationality of others, etc), realized as herding behavior. Thus, in order to model this faithfully, we require a model in which we may quantitatively update the {\em beliefs}. A Bayesian formulation is natural for this. Further, though sequentialness is not inherent to Morton's characterization of pessimism traps, there is a natural sense in which agents consider the context of those who went {\em before} them when making decisions. Accordingly, a sequential model is a good choice for organizing the manner in which agents are influenced.

\textbf{Strengths:} In some sense, the Bayesian posterior update represents the ``optimal'' rational decision. By modeling pessimism traps this way, we are showing that even with perfect rationality, a trap can form, corroborating Morton's point that pessimism traps do not occur due to lack of rationality. Further, as summarized in the appendix, two key features Morton identifies regarding pessimism traps are: (1) there is evidence of similar people not succeeding at the ambitious end and (2) not pursuing the ambitious end will not change the agent's view of its value. Both of these are well-represented in the information cascade model. Particularly once we extend to multiple groups, agents are looking to the history of actions taken by people like them (i.e., in the same group), and witnessing several people taking a certain action would indicate to them that taking the opposite action does not tend to be beneficial for people like them. Likewise, in the information cascade model, there is no feedback for whether a different action would have been correct, and therefore there is no reason for an agent to change their pessimistic view about it.

\textbf{What this model misses:}
On the other hand, since the reward from the taken action is received in one step, this model does not reflect the fact that the ambitious choice requires investment and is contrasted with a choice that has a reasonable payoff throughout. Similarly, the model does not capture risk associated with the ambitious choice. 
These are important modeling considerations for future work.

\section{Time-Varying Subsidy} \label{sec:subsidy}
Without external intervention, once a cascade begins, it persists by definition. However, is it possible to derive an intervention from an external entity, such as the government, that can lift a population out of an incorrect cascade and redirect it toward a correct one? Importantly, can this subsidy be designed so that the correct cascade remains stable once the subsidy is removed? Note that the na\"{i}ve strategy of subsidising the ``correct'' action fails the sustainability criterion, because upon removal, agents have no reason to believe that that action was correct. They would simply believe that agents who took it during the subsidized period did so due to said subsidy. In this section, we focus on the question of how to design a good intervention for a single group / sequence that has {\em sustainable} effects. 

Since a cascade occurs when an agent makes the same decision regardless of their signal, breaking a cascade requires influencing at least some agents to act according to their respective private signals. Indeed, this is our approach to designing a subsidy.
We consider subsidies for the ``correct'' action, which, in this section, we assume to be action $A$. The net reward for choosing the correct action is $R$.

Let the subsidy the government provides at time $t$ be $r_t$. Recall the two possible states of the world: world $A$, where $A$ is the correct action, and world $B$, where $B$ is the correct action. Let $|A|$ indicate the number of choices for $A$ outside of a cascade state, and similarly for $|B|$. Finally, we assume that the entity providing this subsidy is not trusted by the agents, and so the agents cannot infer from the direction of the subsidy which action is correct. 

Algorithm \ref{alg:subsidy} implements this subsidy scheme. We assume that this algorithm is applied after an incorrect cascade has already begun, and the purpose is to strategically provide a subsidy $r_t$ to the agent acting at time $t$ to break the community out of the pessimism trap. When agents are acting according to their own signals, or when the correct cascade has been reached, the subsidy need no longer be applied. The guarantees and derivations for Algorithm \ref{alg:subsidy} are provided in Theorems \ref{thm:subsidy} and~\ref{thm:subsidy_budget}, with complete proofs in the appendix. The main idea is that once agents act in accordance with their signals, after we see enough agents reveal their signal, a simple majority vote will, in expectation, reveal the correct action. At this point, even if the subsidy is removed, rational agents will act optimistically.
\begin{algorithm}
\caption{Redirecting Pessimism Traps}
\begin{algorithmic}[1]
\REQUIRE Start of incorrect cascade $t'$, history $H_{t''}$ for $t'' \geq t'$, $(T - t'')>> \frac{4}{2p-1}$, correct action $A$, incorrect action $B$, private signals $\{s\}_{t=1}^{T}$, signal strength $p$, reward $R$ for correct action
\STATE $|A| \gets$ choices for $A$ in $\barH{t'-1}$
\STATE $|B| \gets$ choices for $B$ in $\barH{t'-1}$
\STATE $ t \gets t''$
\WHILE{$t < T$}
\IF{$|A| - |B| \leq - 2$ (In incorrect cascade)}
\STATE $\gamma_t \gets \left( \frac{1-p}{p} \right)^{2|A|-t} \frac{1-p}{p}$ ; $r_t \gets R \left( \frac{\gamma_t -1 }{1 + \frac{1- p}{p} \gamma_t}\right)$
\ELSE
\STATE $r_t \gets 0$ 
\ENDIF
\IF{$r_t + R \cdot \Pr{\world{A} \mid H_{t-1}, s_t} > R \cdot \Pr{\world{B} \mid H_{t-1}, s_t}$}
\STATE Agent chooses action $A$
\IF{$|A| - |B| \leq 1$ (Not in cascade)}
\STATE $|A| = |A| + 1$
\ENDIF
\ELSE
\STATE Agent chooses action $B$
\IF{$|A| - |B| \leq 1$ (Not in cascade)}
\STATE $|B| = |B| + 1$
\ENDIF
\ENDIF
\STATE Update history and increment $t$
\ENDWHILE
\end{algorithmic}
\label{alg:subsidy}
\end{algorithm}

\begin{thm}
\label{thm:subsidy}
The subsidy scheme used in Algorithm 1 causes all agents to act according to their signals until the population falls into the correct cascade. The subsidy value is
$
r_t = R \left( \frac{\gamma_t -1 }{1 + \frac{1- p}{p} \gamma_t}\right)\,,
$
where $\gamma_t = \left( \frac{1-p}{p} \right)^{2|A|-t} \frac{1-p}{p}\,.$
\end{thm}

\textbf{Proof Sketch} We want the subsidy to incentivize taking action $A$ only if it is already aligned with the agent's signal.
We seek to determine $r_t$ such that it ensures action $A$ is at least as preferable when $S_t = A$ and less preferable when $S_t = B$. That is, we want $r_t$ to satisfy the following conditions, so we simply solve:
\begin{align*} \text{(1) } r_t + &R \cdot \Pr{\world{A} \mid H_{t-1}, s_t = A} \geq R \cdot \Pr{\world{B} \mid H_{t-1}, s_t = A} \\ \text{(2) } r_t + &R \cdot \Pr{\world{A} \mid H_{t-1}, s_t = B} \leq R \cdot \Pr{\world{B} \mid H_{t-1}, s_t = B}
\end{align*}

Due to the fact that the subsidy induces signal-revealing, in order to compute the posterior, we can view the history as a series of revealed signals and consider the likelihood of seeing that stream in each of the worlds. $\hfill \square$

\begin{thm}
\label{thm:subsidy_budget}
In expectation, the subsidy is provided over fewer than $\frac{4}{2p-1}$ rounds and totals no more than $R \frac{4}{2p-1}$.
\end{thm}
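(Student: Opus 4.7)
The plan is to reduce the post-incorrect-cascade dynamics of Algorithm~\ref{alg:subsidy} to an asymmetric simple random walk, apply Wald's first equation (exactly as in the Expected Starting Time lemma) to bound the expected number of rounds, and separately establish a pointwise bound $r_t \le R$ on each per-round subsidy payment.

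By Theorem~\ref{thm:subsidy}, whenever the subsidy is active (i.e., when $|A|-|B|\le -2$), each agent acts on her private signal; by Lemma~\ref{lem:srw}, agents also follow their signals in the interior region $|A|-|B|\in\{-1,0,1\}$, where no subsidy is applied. Consequently, until $|A|-|B|$ first reaches $+2$ and Algorithm~\ref{alg:subsidy} halts subsidization, the integer-valued process $D_t := |A|-|B|$ evolves as a biased random walk taking step $+1$ with probability $p$ and $-1$ with probability $1-p$, initialized at $D_0 = -2$. Let $N$ denote the first hitting time of $+2$. Because the walk has strictly positive drift $2p-1$ and unit-size steps, $N$ is almost surely finite and integrable, and $D_N = 2$ exactly. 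Applying Wald's first equation yields $\mathbb{E}[D_N - D_0] = \mathbb{E}[N](2p-1)$, and since $D_N - D_0 = 4$ deterministically, $\mathbb{E}[N] = \frac{4}{2p-1}$. A subsidy is disbursed only on rounds with $D_t\le -2$, which is a strict subset of the $N$ rounds (the walk must transit through each of $-1,0,1$ on its way from $-2$ to $+2$), yielding the round-count bound.

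For the budget bound, I would verify the pointwise inequality $r_t \le R$ throughout the subsidy region. This follows directly from the definition of $r_t$ in the proof of Theorem~\ref{thm:subsidy} as $R$ times a difference of two posterior probabilities, which always lies in $[-R, R]$; combined with the nonnegativity $r_t \ge 0$ established there via $\gamma_t \ge 1$ on the subsidized region, we obtain $0 \le r_t \le R$. Taking expectations, $\mathbb{E}\!\left[\sum_t r_t\right] \le R \cdot \mathbb{E}[\#\{t : r_t > 0\}] \le R \cdot \frac{4}{2p-1}$.

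The main obstacle I anticipate is confirming that $D_t$ obeys the same $(p,1-p)$ one-step law across the boundary between the subsidized and non-subsidized regimes, so that Wald's equation applies to a single clean random walk rather than a switched process. This reduces to invoking Theorem~\ref{thm:subsidy} on the $|A|-|B|\le -2$ side and Lemma~\ref{lem:srw} on the interior side, together with checking that Algorithm~\ref{alg:subsidy}'s bookkeeping increments $|A|$ and $|B|$ on every non-correct-cascade round, so that $D_t$ is indeed a unit-step walk whose increments are distributed as the agent's private signal. Once this reduction is in hand, the rest is a direct Wald's-equation computation and a trivial pointwise bound on $r_t$.
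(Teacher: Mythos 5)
Your proof follows the same route as the paper's: reduce the post-cascade dynamics to a $(p,1-p)$ random walk that must travel a net displacement of $4$ (from $-2$ to $+2$), apply Wald's first equation to get $\mathbb{E}[N]=\frac{4}{2p-1}$, and bound each per-round payment by $R$. You are in fact somewhat more careful than the paper, which simply asserts the worst case of a maximum subsidy $R$ per round without verifying the pointwise bound $r_t\le R$ or noting that only the rounds with $|A|-|B|\le -2$ (a strict subset of the $N$ rounds counted by Wald) actually incur a subsidy.
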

\textbf{Proof Sketch} To show this, we appeal to the random walk formulation discussed above. If the subsidy value is too small, the agent will decide based on the history, if the subsidy value is too large, the agent will choose the subsidized action, and if the subsidy is ``just right,'' the agent will act according to their signal. The first two cases correspond to not taking any steps on the random walk, and the last case corresponds to taking a step in the direction of the signal on the random walk. Thus, we can analyze the number of total steps that need to be taken in order to net sufficiently many steps to the correct directions. We apply Wald's equation to do this. 
A detailed computation can be found in the Appendix ~\ref{appendix:proof-subsidylen}.  $\hfill \square$

\section{Extension to multiple groups} \label{sec:k-group}
In this section, we remove the strong assumptions of the previous section, namely that (1) the government knows the correct action and (2) everyone has the same correct action.
    Perhaps a more natural assumption would be that there are $k$ groups, where each group has its own correct action (between the two, A and B). 
    In this section, we show how to develop a subsidy scheme in which {\em all groups} end up in their respective correct cascade, even if the government is not privy to the respective correct actions.
    
    In this setting, an agent in a group only sees other agents from their group. For example, consider a small town where the decisions of a student are primarily, if not completely, affected by those in the same town. 
    Suppose the government knows the strength of signals provided to agents, but it does not know the agent's town nor which action they take. At each time $t$, the government may give a subsidy $r_t$ with the goal that, eventually, all groups will reach a stable up cascade on the correct action for that group.

    In this section, we show how the government can construct a distribution $\mathcal{D}$ without knowledge of anything more than described above, such that if at each time $r_t \sim \mathcal{D}\,,$ after sufficiently many steps, with high probability, all groups will have converged to their respective correct cascades.
    Importantly, the key idea here is the same as in Algorithm \ref{alg:subsidy}: one kind of subsidy that will help each group make decisions that are right for them in the long run is one that encourages an agent to reveal their signal.
    Further, recall that we derived exactly a signal-revealing subsidy in the previous section. Thus, we know that for each value of $|A|-|B|\,,$ i.e., for each location along the random walk, there exists a value of the subsidy that incentivizes the agent to reveal their signal. Provided this subsidy value is chosen with at least some fixed minimum probability, we can use a similar random walk analysis to before, only this time slowed down by that probability factor.
    In this section, we flesh this argument out. 
    Proofs can be found in Appendix~\ref{appendix:k-group}.

\subsection{Formally Defining The Setting}
We define the setting formally below.

\begin{definition} \label{defn:multigroup-game}
    Suppose that in the world, there are two potential actions $A$ and $B\,.$ Each agent in the world has an index in $[k]$ associated with them which we call their group. For each group in $[k]\,$, one of the actions is the ``right'' one and the other is the ``wrong'' one. Notably, which action is correct differs between groups. Formally, we may assume the correct action for a group is $A$ with probability 1/2 and $B$ with probability 1/2. \par
    At time $t\,,$ the ``universe'' draws an index $j \sim \mathcal{G}\,,$ where $\mathcal{G}$ is a distribution supported on $[k]\,,$ with $j$ signifying the group index, and the minimum value of the probability mass function is $g_\text{min}$. It is a new agent's turn to make a decision, and they are a member of group $j\,.$ This agent sees the subsidy history for all agents and need rationally only consider the action history of those in group $j$ who have gone before them. They receive a private signal $s_t\,,$ which is the correct action for the group to which they belong with probability $p_j > 1/2$ and incorrect with probability $1-p_j\,.$ They consider their group history and private signal to update their posterior belief as shown in Equation~\ref{eq:post}, and then incorporate the present subsidy to make their final decision.
\end{definition}

Next, we define the subsidy scheme followed by the government. In fact, it suffices to define a uniform distribution over the possible values the subsidy would need to take, i.e., all values that would encourage signal-revealing in an agent before that agent's group hits a cascade. 

\begin{definition} \label{defn:government-subsidy}
    Let $v_{x, p}$ be the size of subsidy that causes signal revelation by the current agent when there are $x$ more $A$ actions than $B$ ones.
    Define distribution $\mathcal{D}$ as supported on $\mathcal{V}$, all possible values of $v_{x, p}\,,$ and having probability mass associated with any $v_{x, p}$ as $1/|\mathcal{V}|\,.$ 
    The government draws the subsidy they provide from this distribution $\mathcal{D}\,.$ 
\end{definition}

\subsection{Main Result}

In this section, we show that the random process described in Definition~\ref{defn:government-subsidy} applied to the setting described in Definition~\ref{defn:multigroup-game} for a reasonable amount of time shifts groups into correct cascades with high probability.

\begin{thm} \label{thm:kgroup}
    Suppose the government provides the subsidy detailed in Definition~\ref{defn:government-subsidy} for the game in Definition~\ref{defn:multigroup-game}. That is, at each time step $t > 0 \,,$ the government (without knowledge of the group of the current agent or the history) draws a subsidy at random $r_t \sim \mathcal{D}\,,$
    where the probability of any element in the support of $\mathcal{D}$ is at least $p_\text{min}\,.$ 
    Then, for all $\delta > 0,$ after $\frac{2\, k}{g_\text{min}} \, \left( \frac{ 2 \frac{\log(3k/\delta - 1)}{\log(p/(1-p))}}{\pmin \, (2p - 1)} + \frac{2 \log(3k/\delta)}{\pmin^2 (2p-1)^2}   + \log(3k/\delta)\right)$ steps, with probability at least $1-\delta\,,$ 
    all $k$ groups will end up in what is for them an up-cascade. 
\end{thm}

\textbf{Proof Sketch} At a high level, the proof proceeds as follows: (1) We fix a group and show that the subsidy behaves, as before, as a random walk that group takes on the number line. (2) We show how long it takes to achieve with high probability a sufficient condition for the walk to finish in a cascade on the correct action. (3) We union bound over the failure probability and appropriately scale the time required to achieve the stated result. 

\textbf{(1)} Let us first describe a random walk $\mathcal{R}_\text{gvt}$ that models the setting in Definition~\ref{defn:multigroup-game} with the government subsidy described in Definition~\ref{defn:government-subsidy}. See the appendix for a proof. Next, we define a related but simpler random walk, $\mathcal{R}$. In both cases, the reverse walk describes the same for a group for whom the correct direction is the left.

\begin{lemma} \label{lemma:random-walk-gvt}
    The following random walk, which we shall call $\mathcal{R}_\text{gvt}\,,$ describes the walk taken by a group for whom the correct action is to the right when the government provides the subsidy described in Definition~\ref{defn:government-subsidy}:
        $\Pr{i \text{ to } i-1} = \alpha_i \cdot (1-p) \, ; \,
        \Pr{i \text{ to } i} = 1-\alpha_i \, ; \,
        \Pr{i \text{ to } i+1} = \alpha_i\cdot p\,,$
    where $\alpha_i \coloneqq \mathbb{P}_\text{gvt choice}\left[\text{signal is revealed in this state}\right]\,.$ 
\end{lemma}

\begin{definition} \label{defn:rw-aug}
    Let us call the random walk with the following transition probabilities $\mathcal{R}$:
        $\Pr{i \text{ to } i-1} = p_\text{min} \cdot (1-p) \, ; \,
        \Pr{i \text{ to } i} = 1-p_\text{min} \, ; \,
        \Pr{i \text{ to } i+1} = p_\text{min}\cdot p\,.$ 
\end{definition}

In Lemma~\ref{lemma:analyse-rw-aug} in the appendix, we show that we can analyze $\mathcal{R}$ instead of $\mathcal{R}_\text{gvt}\,.$

\textbf{(2)} Next, we consider three modes of failure: first, if there are not sufficiently many signals aligned to the correct direction, the majority vote will not align with the correct action for the group; second, since the government only encourages signal-revealing but does not behave differently depending on which action is correct for a group, we may accidentally hit a bad cascade before hitting a good one just due to a bad ordering in the sequence of signals; third, we may not see enough people from this group to take sufficiently many steps on the random walk. We study the first in Lemma~\ref{lemma:whptrw}, plugging in a confidence parameter $\delta/(3k)$ for the failure probability. Then, the second is addressed by Lemma~\ref{lemma:prl-minusl}, again using $\delta/(3k)$ as the failure probability. Finally, we apply the Hoeffding bound with failure probability $\delta/(3k)$ to ensure we see sufficiently many people. With that, the analysis is complete for a single fixed group. 

\textbf{(3)} Finally, we union bound over the failure probability so that the result holds and report after how long of the government providing such a subsidy, with high probability {\em all} groups stabilize to optimism cascades. Detailed proofs for this theorem and its constituent lemmas are in the appendix. $\hfill \square$

\begin{rmk}
    For sake of generality, we present the result in Theorem~\ref{thm:kgroup} in terms of $\pmin\,,$ and $g_\text{min}\,.$ However, let us plug these in and discuss the scaling for intuition. 
    First, the support of the distribution $\mathcal{D}$ has size at most $2\Lrw$, so $p_\text{min} \le 1/(2L) \le 1/(2 \Lrwf) \,.$ Plugging this in, upper bounding, and ignoring constants, we get that the number of total required steps scales like $\frac{k\,L^2}{g_\text{min}} \, \cdot \, \frac{\log(3k/\delta)}{\left( p - \frac 12 \right)^2}\,,$ where $L = \frac{\log(3k/\delta - 1)}{\log(p/(1-p))}\,.$ We can see that the number of steps scales like $1/(p-1/2)^2\,,$ implying that the closer the signal strength to $1/2\,,$ the longer it takes to drift far enough in the walk. This is standard for problems where we must distinguish whether a ``coin flip'' has bias $1/2+\epsilon$ or $1/2-\epsilon\,.$ Next, we see the standard dependence on the failure probability, $\log(1/\delta)\,,$ and union bound, $\log 3k\,$. Finally, we see the inverse dependence on $g_\text{min}\,,$ meaning that the lower the minimum probability of seeing a group, the longer this subsidy needs to be in place. 
    Now, if the probability of an agent belonging to a group is uniform across the $k$ groups, then $g_\text{min} = 1/k\,,$ and so the scaling is like $ L^2 \, k^2 \, \cdot \, \frac{\log(3k/\delta)}{\left( p - \frac 12 \right)^2}\,.$ From this, we can see that the dominant dependence of this bound on the number of groups is through the prevalence of the lowest-prevalence group and making sure each group takes sufficiently many steps. 
\end{rmk}
 
\section{Experiments}

Finally, we conducted simulations to assess the effectiveness of financial supplements in overcoming pessimism traps in our theoretical model. We study the first setting, with a single group and correct action known to government. The primary objectives of our experiments were to evaluate the impact of financial supplements on the proportion of correct cascades, analyze the magnitude of the subsidies, and understand the scalability of interventions across population sizes.

\subsection{Data Generation and Procedure}

The data for our experiments were generated through simulations that model the sequential decision-making process of agents. Each agent must choose between two actions, $A$ and $B$, where $A$ is uniformly the better option, though this is unknown to the agents \textit{a priori}. The agents receive private signals indicating the correctness of their choice, with a probability $p$ of being correct. 
Each simulation was conducted as follows:
    We initialized a population of $N \in \{10, 100, 1000\}$ agents to explore the effects of population size on eventual cascade behavior.
    Each agent received a private signal with strength $p \in [0.51, 0.99]$.
    For each pair of values for $N$ and $p$, we repeat the experiment 100 times and average the results. For the proportion of correct cascades, we repeat each such experiment 10 times and report one standard deviation/$\sqrt{10}$ as the error bars. For the subsidy size, we report the standard deviation over the 100 trials / $\sqrt{100}$ as the error bars. 
    The agents sequentially made their decisions after observing the actions of all preceding agents.
    Financial supplements as derived above were introduced to influence the agents' decisions.

\subsection{Results}

\subsubsection{Impact of Financial Supplement on Correct Cascades}

\begin{figure}[ht!]
    \centering
    \includegraphics[width=0.8\textwidth]{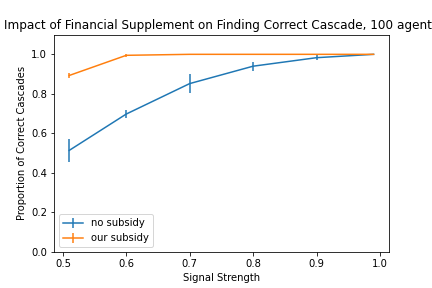}
    \caption{Impact of Financial Supplement on Finding Correct Cascade with Supplement, 100 agents.}
    \label{fig:using_supplement}
\end{figure}

\begin{figure}[ht!]
    \centering
    \includegraphics[width=0.8\textwidth]{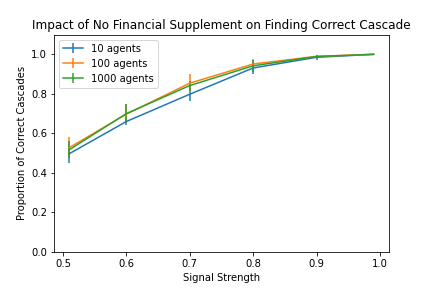}
    \caption{Probability of Finding Correct Cascade without Supplement \vspace{-5mm}}
    \label{fig:without_supplement}
\end{figure}

Figure \ref{fig:using_supplement} shows the proportion of correct cascades when a financial supplement is provided (orange line) to 100 agents compared to when it is not (blue line). The results indicate a significant improvement in the proportion of correct cascades, especially at lower signal strengths. For example, with a signal strength of 0.6, the proportion of correct cascades increases markedly (including across different population sizes (10, 100, and 1000 agents), see appendix for plots) when the supplement is used. This demonstrates the benefit of the proposed financial incentive in decision-making.

In contrast, Figure \ref{fig:without_supplement} shows the proportion of correct cascades without the supplement for varying numbers of agents. The performance is notably low, particularly for weaker signals. This highlights the promise of financial interventions in overcoming pessimism traps and steering agents toward optimal decisions. Also observe that the number of agents in the sequence has very little effect -- once a cascade forms, there is no new information to switch out of it even when there are new agents.

\subsubsection{Average Subsidy Progression}

To understand the dynamics of the financial supplement, we analyzed the progression of subsidies over time for different signal strengths. Figure \ref{fig:subsidy_100_agents} and Figures \ref{fig:subsidy_10_agents} and \ref{fig:subsidy_1000_agents} in the appendix depict these results for populations of 100, 10, and 1000 agents, respectively.

\begin{figure}[ht!]
    \centering
    \includegraphics[width=0.80\textwidth]{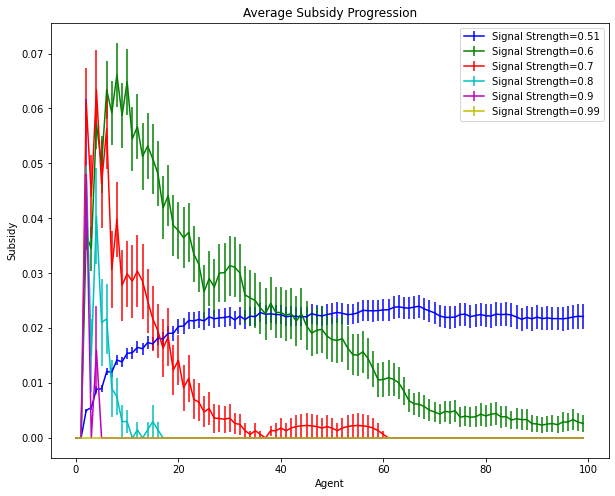}
    \caption{Average Subsidy Progression for 100 Agents \vspace{-3mm}}
    \label{fig:subsidy_100_agents}
\end{figure}

Figure \ref{fig:subsidy_100_agents} presents the average subsidy progression for a population of 100 agents. The required subsidy stabilizes after a few agents have made their decisions, indicating that once the initial agents are guided correctly, the need for subsequent subsidies diminishes, a benefit for allocation of resources. The line for the weakest signal levels off -- based on our theoretical results, we know that it takes a long time to stabilize when the signal margin is narrow.

Interestingly, the required supplement was smallest among the first few agents in the setting with the weakest private signal. This is easily seen in Figure \ref{fig:subsidy_10_agents}. In all three figures, we observe a spike in the supplement needed at the beginning, yet the necessary supplement typically returns to zero quite quickly. All three plots, however, show that in the case where $p=0.51$, the subsidy stays consistent across the population. This is likely because the signal is weak enough that it takes too long on average for the best action to be detected compared to the number of agents. It is also interesting to observe the average starting times for the subsidies, which align with the beginning of the pessimism trap and almost always begin within the first 10 rounds.

\subsection{Discussion and Implications}

The experiments validate our theoretical predictions regarding the impact of financial supplements on decision-making under uncertainty. The simulations provide several key insights. First, the introduction of financial supplements significantly improves the proportion of correct cascades, particularly in scenarios with lower signal strength and smaller populations. This underscores the importance of targeted interventions in guiding agents towards optimal decisions. Second, the progression of subsidies indicates that early interventions are critical. Once the initial agents are influenced correctly, the necessity for continued subsidies diminishes, suggesting an efficient use of resources. Lastly, the results suggest that the scalability of financial interventions is feasible. Larger populations benefit from reduced per-capita subsidies, making such policies more practical for broader applications.

\clearpage
\bibliography{refs}
\onecolumn

\appendix
\section{Key features of pessimism traps}
\label{appendix:features}
Morton's epistemic characterization of pessimism traps builds upon several empirical characterizations from scholars studying the phenomenon in the field of education. Morton elucidates pessimism traps via the following key distinguishing features [pgs. 732-3 of \cite{morton2022resisting}, paraphrased here]:
\begin{enumerate}
    \item Making the ambitious choice is an investment, involving a long duration before payoff and hard work.
    \item There is a feasible alternative that still has a reasonable payoff.
    \item There is risk involved in the ambitious choice, enough that it would affect the choice of a risk-averse agent.
    \item There is strong evidence that succeeding in the ambitious end when coming from the group of interest is of low likelihood.
    \item Not pursuing the ambitious end will not change the pessimistic view they hold about it.
\end{enumerate}

Importantly, the choices in question aren't even necessarily the ``right'' and ``wrong'' ones, but rather they are ``ambitious'' and ``moderate.'' 
\section{Proof for Posterior Update (Eqn.~\ref{eq:post})}

\begin{proof}
Given the conditional independence of $s_t$ and $\barH{t-1}$ when conditioned on the best action being either A or B, we can express their joint probability as:
\begin{align*}
\Pr{s_t, \barH{t-1}|\world{A}} &= \Pr{s_t|\world{A}} \Pr{\barH{t-1}|\world{A}}
\end{align*}

Then, applying Bayes' theorem, we have:
\begin{align*}
&\Pr{\world{A}|s_t, \barH{t-1}} = \frac{\Pr{s_t|\world{A}}\Pr{\barH{t-1}|\world{A}}\Pr{\world{A}}}{\Pr{s_t, \barH{t-1}}}
\end{align*}

Expanding the denominator using the law of total probability:
\begin{align*}
\Pr{s_t, \barH{t-1}} &= \Pr{\world{A}} \Pr{s_t, \barH{t-1}|\world{A}} + \Pr{\world{B}} \Pr{s_t, \barH{t-1}|\world{B}}
\end{align*}

Substituting the conditional independence:
\begin{align*}
\Pr{s_t, \barH{t-1}} &= \Pr{\world{A}}\Pr{s_t|\world{A}} \Pr{\barH{t-1}|\world{A}} + \Pr{\world{B}}\Pr{s_t|\world{B}} \Pr{\barH{t-1}|\world{B}}
\end{align*}

Finally, because we assume a uniform prior over which action is correct, we have $\Pr{\world{A}} = \Pr{\world{B}}$ and can thus cancel out all of those terms.

\end{proof}

\section{Formulation as a Random Walk} \label{appendix:random-walk}
Calculating the posterior can be difficult depending on the tie-breaking rule, as agents have to reverse engineer previous individuals' thought processes. However, with the tie-breaking rule we employ and our eventual subsidy, we can reformulate this process as a one-dimensional random walk on the integers. We begin with a toy example for intuition.

\begin{lemma}[Asymmetric Simple Random Walk in -1 to 1]
\label{lem:srw}
While the number of choices for action $A$ and action $B$ in the history $\barH{t-1}$ differ by no more than 1, agent $t$ will follow their signal. Thus, the process is an asymmetric simple random walk in this interval.
\end{lemma}

\begin{proof}
We will prove this through induction. Our inductive hypothesis is that if the previous $t-1$ agents acted according to their signal, and if the number of choices of action $A$ and $B$ differ by at most one, then agent $t$ will follow their signal. 

To begin, we establish the base case. Consider the first agent. Note that the hypothesis that the previous $t-1$ agents acted according to their signal is vacuously true. Now, let us look at the two signal options for the first agent to verify that they indeed follow their signal in each case. 

If the first agent receives signal $s_1=A$, they will update their posterior, according to Eqn.~\ref{eq:post} as:

\begin{align*}
\Pr{\world{A}|s_1=A, H_0 = \{\}} = \frac{\Pr{s_1=A|\world{A}}\Pr{H_0=\{\}|\world{A}}}{\sum_{w \in \{A,B\}}\Pr{s_1=A|\world{w}}\Pr{H_0=\{\}|\world{w}}}
\end{align*}

Now, because $H_0$ is always $\{\}$, we can set $\Pr{H_0=\{\}|\world{A}} = \Pr{H_0=\{\}|\world{B}} = 1$ and simplify the above expression to:

\begin{align*}
\Pr{\world{A}|s_1=A, H_0} &= \frac{\Pr{s_1=A|\world{A}}}{\Pr{s_1=A|\world{A}}+ \Pr{s_1=A|\world{B}}} = \frac{p}{p + (1-p)} = p
\end{align*}

Because $p > \frac{1}{2}$, the first agent will choose action $A$ in this case, thus following their signal.

If, instead the agent receives signal $B$, they will update their posterior as:
\begin{align*}
\Pr{\world{A}|s_1=B, H_0} &= \frac{\Pr{s_1=B|\world{A}}}{\Pr{s_1=B|\world{A}}+ \Pr{s_1=B|\world{B}}}
= \frac{1-p}{1-p + p}  = 1-p
\end{align*}

In this case, they will choose action $B$, again following their signal. From this, we can say that the first agent's action is identical to the action of their signal and so is distributed as a Bernoulli random variable with success parameter $p$.

 This satisfies the base case. Now, we proceed by induction. Suppose that the previous $t-1$ agents acted according to their signal, i.e., $H_{t-1} = \barH{t-1}$. Let $|A|$ indicate the number of choices for $A$ observed among the first $t-1$ agents, and let $|B|$ be the number of choices for $B$ among those agents. Then, if the $t^{th}$ agent gets signal $A$, their posterior is:

\begin{align*}
\Pr{\world{A}|s_t=A, H_{t-1}} = &\frac{p^{|A|+1}(1-p)^{|B|}}{p^{|A|+1}(1-p)^{|B|} + p^{|B|}(1-p)^{|A|+1}}
\end{align*}

If $|A| = |B|$, then the posterior is equal to $p$, and the agent follows their signal.
If $|A| = |B| - 1$, the posterior is equal to $\frac{1}{2}$, and the agent breaks the tie by following their signal. Finally, if $|A| = |B| + 1$, the posterior is $\frac{p^2}{p^2 + (1-p)^2}$, which is greater than $\frac{1}{2}$ for $p> \frac{1}{2}$.

Equivalently, if they get the signal $B$, the posterior is:

\begin{align*}
\Pr{\world{A}|s_t=B, H_{t-1}} = &
\frac{p^{|A|}(1-p)^{|B|+1}}{p^{|A|}(1-p)^{|B|+1} + p^{|B|+1}(1-p)^{|A|}}
\end{align*}

\end{proof}

\begin{lemma}[Stopping Points at -2 or 2]
\label{lem:stop}
A cascade will begin if the number of choices for action $A$ and action $B$ in the history $H_{t-1}$ differ by 2.
\end{lemma}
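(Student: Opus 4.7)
The plan is to prove this by direct computation, leveraging the posterior formulas already derived in Lemma~\ref{lem:srw}. By symmetry between $A$ and $B$, I would assume without loss of generality that $|A| = |B| + 2$ after the first $t-1$ agents act, and show that agent $t$ chooses $A$ regardless of their private signal $s_t$, which satisfies the first condition of the Information Cascade definition.

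First, I would observe that since the difference between $|A|$ and $|B|$ only changed by $\pm 1$ with each agent (because, by Lemma~\ref{lem:srw}, up until the gap reaches 2 each agent simply follows their signal), the gap first hits 2 immediately after an agent takes action $A$ from a state where $|A| = |B|+1$. In particular, the inductive premise of Lemma~\ref{lem:srw} --- that all previous $t-1$ agents acted according to their signals --- still holds when agent $t$ is about to decide. This is the crucial validity check, because it means the posterior formulas derived in Lemma~\ref{lem:srw} apply verbatim with $|A|$ and $|B|$ equal to the observed counts.

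Next, I would plug $|A| = |B|+2$ into both posterior expressions from Lemma~\ref{lem:srw}. When $s_t = A$, dividing numerator and denominator by $p^{|B|}(1-p)^{|B|}$ reduces the posterior to $\frac{p^3}{p^3 + (1-p)^3}$, which exceeds $\tfrac{1}{2}$ whenever $p > \tfrac{1}{2}$. When $s_t = B$, a similar cancellation reduces the posterior to $\frac{p}{p + (1-p)} = p > \tfrac{1}{2}$. Hence both posteriors strictly exceed $\tfrac{1}{2}$, so agent $t$'s decision is independent of $s_t$, matching the definition of an Information Cascade. The symmetric case $|B| = |A|+2$ yields a cascade on $B$ (a pessimism trap, in our terminology) by the same algebra.

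The main obstacle, such as it is, is bookkeeping rather than a deep step: I need to be careful that the counts $|A|$ and $|B|$ used in the formulas really do reflect the informational content of the history, which only holds because no prior agent has yet ignored their signal. Once the gap reaches 2 and a cascade begins, subsequent actions no longer carry signal information, which is exactly why the Remark after the Pessimism Trap definition restricts posterior computations to the pre-cascade history. I would flag this to make clear that the lemma pinpoints the \emph{first} moment a cascade can start, and that Lemmas~\ref{lem:srw} and~\ref{lem:stop} together characterize the process as an asymmetric simple random walk on $\{-1,0,1\}$ with absorbing boundaries at $\pm 2$.
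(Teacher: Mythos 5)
Your proposal is correct and takes essentially the same approach as the paper: a direct computation showing that when the count gap reaches $2$, the posterior favors the leading action regardless of the private signal (the paper phrases this via the posterior odds ratio $\left(\tfrac{p}{1-p}\right)^{|A|-|B|\pm 1} > 1$, while you compute the posteriors $\tfrac{p^3}{p^3+(1-p)^3}$ and $p$ explicitly --- the same calculation in different form). Your added bookkeeping --- that the gap first hits $2$ from a state where all prior agents followed their signals, so the formulas of Lemma~\ref{lem:srw} apply verbatim --- is a slightly more careful justification than the paper gives, but not a different argument.
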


\begin{proof}
To see this, assume that at time $t$,  $|A| - |B| \geq 2$ or $|A| - |B| \leq -2$. Consider the first case. If agent $t$ gets signal $A$, their posterior is

\[
\frac{p}{1-p}^{|A|-|B|+1} \geq \frac{p}{1-p}^3 \geq 1 
\]

and so they take action $A$. If they get signal $B$, their posterior is  
\[
\frac{p}{1-p}^{|A|-|B|-1}\geq \frac{p}{1-p} \geq 1 
\]

So, regardless of their signal, they choose action $A$. Thus, a cascade begins, because this condition will be maintained for each subsequent agent. An equivalent analysis applies to the second case.
\end{proof}

\begin{lemma}[Probability of Wrong Cascade] \label{lemma:prcascade}
The probability of an incorrect cascade for $p>\frac{1}{2}$ is  $$\frac{\left( \frac{1-p}{p} \right)^2 + \left( \frac{1-p}{p} \right)^3}{1 + \left( \frac{1-p}{p} \right) + \left( \frac{1-p}{p} \right)^2 + \left( \frac{1-p}{p} \right)^3}\,.$$
\end{lemma}
\begin{proof}
We can compute this probability via a recurrence. Consider a random walk on the number line. Define $X_i$ as the probability a walk starting at $i$ reaches 2 before it reaches -2. Then, we have that:
\begin{align*}
    X_i = p \, X_{i+1} + (1-p) \, X_{i -1} \quad ; \quad 
    X_2 = 1 \quad ; \quad
    X_{-2} = 0
\end{align*}

We can solve this system as follows:
\begin{align*}
    Y_i &\coloneqq X_i - X_{i-1} \\
   p \, X_i + (1-p) \, X_i &=  p \, X_{i+1} + (1-p) \, X_{i -1} \Leftrightarrow (1-p) \, Y_i = p \, Y_{i+1} \\
   Y_i &= Y_0 \, \left(  \frac{1-p}{p} \right)^i \\
   \sum_{i = -1}^2 Y_i = X_2 - X_{-2} = 1 \quad \Rightarrow \quad
   Y_0  &= \frac{1}{\sum_{i = -1}^2 \left( \frac{1-p}{p} \right)^i} \\
   X_i &= X_{-2} + \sum_{j = -1}^i Y_j = \frac{1}{\sum_{j = -1}^2 \left( \frac{1-p}{p} \right)^j} \sum_{k = -1}^i \left(  \frac{1-p}{p} \right)^k \,.
\end{align*}
Simplifying this somewhat, we have that starting at 0:
\begin{align*}
    \Pr{\text{up cascade}} &= \frac{1 + \left( \frac{1-p}{p} \right)}{1 + \left( \frac{1-p}{p} \right) + \left( \frac{1-p}{p} \right)^2 + \left( \frac{1-p}{p} \right)^3} \\
    \Pr{\text{down cascade}} &= \frac{\left( \frac{1-p}{p} \right)^2 + \left( \frac{1-p}{p} \right)^3}{1 + \left( \frac{1-p}{p} \right) + \left( \frac{1-p}{p} \right)^2 + \left( \frac{1-p}{p} \right)^3} \\
\end{align*}

Note that as $(1-p)/p$ approaches 1 (i.e., $p \rightarrow 1/2)\,,$ the probability of a down cascade increases.
\end{proof}
Therefore, there is a substantial probability of an incorrect cascade, increasing as $p$ gets closer to $\frac{1}{2}$.

\begin{lemma}[Expected Starting Time] \label{lemma:estart} 
In expectation, it will take $\frac{2}{1-2p + 2p^2}$ agents for a cascade to begin.
\end{lemma}

\begin{proof}

To compute this, let us consider the same random walk defined in Lemma~\ref{lemma:prcascade}. With respect to this random walk, let us now define $Z_i = \E{\text{time to } \pm 2 \text{ starting from } i}\,.$ That is, we are interested in the expected time it takes to converge to a cascade. We can write the following equations and initial conditions and solve manually:

\begin{align*}
    Z_2 &= 0 \quad ; \quad
    Z_{-2} = 0 \\
    Z_i &= p\,(Z_{i+1} + 1) + (1-p)\, (Z_{i -1} + 1) = 1 + p\,(Z_{i+1}) + (1-p)\, (Z_{i -1})\\
    Z_0 &= 1 + p\,(Z_{1}) + (1-p)\, (Z_{-1}) \\
    Z_1 &= 1 + p \, Z_2 + (1-p)\, Z_0 = 1 + (1-p) Z_0 \\
    Z_{-1} &= 1 + p \, Z_0 + (1-p)\, Z_{-2} = 1 + p \, Z_0 \\
    Z_0 &= 1 + p\,(1 + (1-p) Z_0) + (1-p)\, (1 + p \, Z_0) = 2 + 2\, p \, (1-p) \, Z_0 = \frac{2}{1-2\,p\,(1-p)}\,.
\end{align*}

\end{proof}

\section{Time-Varying Subsidy Proofs}


\subsection{Proof of Theorem~\ref{thm:subsidy}}

\begin{proof}
Consider the scenario where the net expected reward for choosing action $A$ includes both the base reward $R$ and a potential subsidy $r_t$:
\begin{align*}
\left(R + r_t \right) \Pr{\world{A} \mid \barH{t-1}, s_t} + r_t \cdot \Pr{\world{B} \mid \barH{t-1}, s_t} = r_t + R \cdot \Pr{\world{A} \mid H_{t-1}, S_t}.
\end{align*}

Similarly, the reward for choosing action $B$ is:
\[
R \cdot \Pr{\world{B} \barH{t-1}, s_t}
\]

We seek to determine $r_t$ such that it ensures action $A$ is at least as preferable when $S_t = A$ and less preferable when $S_t = B$. This leads to the following conditions:
\begin{align*} \text{(1) } r_t + R \cdot \Pr{\world{A} \mid \barH{t-1}, s_t = A} &\geq  R \cdot \Pr{\world{B} \mid \barH{t-1}, s_t = A} \\ \text{(2) } r_t + R \cdot \Pr{\world{A} \mid \barH{t-1}, s_t = B} &\leq R \cdot \Pr{\world{B} \mid \barH{t-1}, s_t = B}
\end{align*}

Simplifying these inequalities, we find:
\begin{align*}
    r_t &\geq R \left( \Pr{\world{B} \mid \barH{t-1}, s_t = A}- \Pr{\world{A} \mid \barH{t-1}, s_t = A }\right), \\
    r_t &\leq R \left( \Pr{\world{B} \mid \barH{t-1}, s_t = B} - \Pr{\world{A} \mid \barH{t-1}, s_t = B}\right).
\end{align*}

Now, let us rewrite the probabilities by expanding the posterior in terms of the evidence (and recalling that the prior over best actions is uniform)
\begin{align*}
    \Pr{\world{A} \rvert \barH{t-1}, s_t = A} &= \frac{\Pr{\barH{t-1}, s_t = A \rvert \world{A}} }{\Pr{\barH{t-1}, s_t = A \rvert \world{A}} + \Pr{ \barH{t-1}, s_t = A \rvert \world{B}} } \\
    &= \frac{p \cdot \Pr{\barH{t-1} \rvert \world{A}} }{p \cdot \Pr{\barH{t-1} \rvert \world{A}}  + (1- p) \cdot \Pr{\barH{t-1} \rvert \world{B}}}
\end{align*}

\begin{align*}
    \Pr{\world{A} \rvert \barH{t-1}, s_t = B} &= \frac{\Pr{\barH{t-1}, s_t = B \rvert \world{A}}}{\Pr{\barH{t-1}, s_t = B \rvert \world{A}}+ \Pr{\barH{t-1}, S_t = B \rvert \world{B}} } \\
    &= \frac{(1-p) \cdot \Pr{\barH{t-1} \rvert \world{A}}}{p \cdot \Pr{\barH{t-1}\rvert \world{A}}  + (1- p) \cdot \Pr{\barH{t-1} \rvert \world{B}} }
\end{align*}

\begin{align*}
    \Pr{\world{B} \rvert \barH{t-1}, s_t = A} &= \frac{\Pr{\barH{t-1}, s_t = A \rvert \world{B}} }{\Pr{\barH{t-1}, s_t = A \rvert \world{A}} + \Pr{ \barH{t-1}, s_t = A \rvert \world{B}} } \\
    &= \frac{p \cdot \Pr{\barH{t-1} \rvert \world{B}} }{p \cdot \Pr{\barH{t-1} \rvert \world{A}}  + (1- p) \cdot \Pr{\barH{t-1} \rvert \world{B}}}
\end{align*}

\begin{align*}
    \Pr{\world{B} \rvert \barH{t-1}, s_t = B} &= \frac{\Pr{\barH{t-1}, s_t = B \rvert \world{B}}}{\Pr{\barH{t-1}, s_t = B \rvert \world{B}}+ \Pr{\barH{t-1}, S_t = B \rvert \world{B}} } \\
    &= \frac{(1-p) \cdot \Pr{\barH{t-1} \rvert \world{B}}}{p \cdot \Pr{\barH{t-1}\rvert \world{A}}  + (1- p) \cdot \Pr{\barH{t-1} \rvert \world{B}} }
\end{align*}

Suppose $|A|$ people have decided on $A$ in the history $\barH{t-1}$ (excluding any decisions made during a cascade). We define:
\begin{align*}
\gamma_t \coloneqq \frac{\Pr{\barH{t-1} \mid \world{B}}}{\Pr{\barH{t-1} \mid \world{A}}} = \frac{(1-p)^{|A|} p^{t-1-|A|}}{p^{|A|} (1-p)^{t-1-|A|}} = \left( \frac{1-p}{p} \right)^{2|A|-t} \frac{1-p}{p}.
\end{align*}

Then, we can re-write the above expressions as 
\begin{align*}
    \Pr{\world{A} \rvert \barH{t-1}, s_t = A} = \frac{1}{1  + \frac{1- p}{p} \gamma_t}
 \quad ; \quad
    \Pr{\world{B} \rvert \barH{t-1}, s_t = A} = \frac{\gamma_t }{1 + \frac{1- p}{p} \gamma_t}
\end{align*}

Similarly, we can re-write the expressions in our second condition as 
\begin{align*}
    \Pr{\world{A} \rvert \barH{t-1}, s_t = B} = \frac{1}{\frac{p}{1-p} + \gamma_t}
 \quad ; \quad
    \Pr{\world{B} \rvert \barH{t-1}, s_t = B} =  \frac{\gamma_t}{\frac{p}{1-p}  + \gamma_t}
\end{align*}

This implies that $R \left( \frac{\gamma_t -1 }{1 + \frac{1- p}{p} \gamma_t}\right) \leq r_t \leq R \left( \frac{\gamma_t - 1}{\frac{p}{1-p} + \gamma_t }\right)$

Finally, we show that there is always a non-negative value for the subsidy in this interval. Although this is not strictly required, we want to avoid a situation where an external entity is actually taking money from the agents. To do this, it is sufficient to see that $\gamma_t \geq 1$ in the region where the subsidy is needed to cause agents to act according to their signal and when the correct cascade has not yet been reached. By Lemmas \ref{lem:srw} and \ref{lem:stop}, this is the region where $|B| - |A| \geq 2$. Then,

\begin{align*}
\gamma_t = \frac{(1-p)^{|A|} p^{t-1-|A|}}{p^{|A|} (1-p)^{t-1-|A|}}  = \frac{(1-p)^{|A|} p^{|B|}}{p^{|A|} (1-p)^{|B|}} = \frac{p}{1-p}^{|B|-|A|}  \geq 1
\end{align*}

\end{proof}

\subsection{Proof of Theorem~\ref{thm:subsidy_budget}}
\label{appendix:proof-subsidylen}

\begin{proof}
We apply Wald's first equation to upper bound the amount of time for which the subsidy must be released. This time, we imagine that we must move from a state where there are two more choices for $B$ than $A$ to a state where there are two more choices for $A$ than $B$. This reduces to the problem of a random walk with probability $p$ of moving right hitting 2 when starting from -2 (with no left stopping point). Let $N$ be the stopping time at which this event occurs. Using Wald's equation, we have $\mathbb{E}[S_N]=4$ and $\mathbb{E}[X_1]= p \cdot 1 + (1-p) \cdot -1 = 2p-1$. Then,
$\mathbb{E}[N] = \frac{\mathbb{E}[S_N]}{\mathbb{E}[X_1]} = \frac{4}{2p-1}$. The corresponding subsidy calculation considers the worst-case scenario where each round necessitates the maximum subsidy $R$, leading to:
\begin{align*}
\mathbb{E}[\text{Total subsidy}] = R \times E(N) = R \times \frac{4}{2p-1}.
\end{align*}
\end{proof}

\section{Details for extension to multiple groups} \label{appendix:k-group}

\subsection{Defining The Setting}

We know that there is some way in which to subsidize the agent that incentivizes them to reveal their signal.
\begin{fact} \label{lemma:value-exists}
    For a group with signal strength $p$, for each value $x = |A|-|B|\,,$ there exists a value $v_{x, p}$ such that if the government provides a subsidy to the agent in play when there are $x$ more $A$ actions than $B$ actions of $v_{x, p}$ (i.e., subsidizes action $A$ by $v_{x, p}$ if $v_{x, p} > 0$ and action $B$ by $|v_{x, p}|$ otherwise), then the agent acts in such a way that their signal is revealed. The subsidy value $v_{x, p}$ is a function of $x\,,$ and $p\,,$ the probability that the signal is aligned with the correct action.
\end{fact}

\begin{proof}
    The existence of this subsidy is guaranteed by the fact that we construct such a subsidy in the previous section. In particular, see Algorithm~\ref{alg:subsidy} and Theorem~\ref{thm:subsidy}.
\end{proof}

\subsection{Analyzing the Subsidy Scheme}

We can model the state of a fixed group receiving a randomly chosen subsidy from the distribution $\mathcal{D}$ using a random walk, as before. In Lemma~\ref{lemma:random-walk-gvt}, we describe that random walk, following which in Definition~\ref{defn:rw-aug} we define a related walk. 

\subsubsection{Proof of Lemma~\ref{lemma:random-walk-gvt}}

\begin{proof}

    First, as before, we can formalize this process as a random walk, where the location on the random walk depends on the net difference between the number of people taking the ``positive'' action for their group and the number taking the ``negative'' action. It is clear that if an agent reveals their signal (which is ``right'' (both in terms of alignment with world and in direction on the walk) with probability $p$) they take a step in the ``right'' direction. Now, if the agent does not reveal their signal, there is no step taken on the random walk, since the location on the random walk is given by the number of revealed $A$ signals less the number of revealed $B$ signals. If the agent does not reveal their signal, then the action they take is ascribed by an agent to the subsidy they received and therefore does not affect the location on the random walk.
    We formalize this in the lemma before.
    
    \begin{lemma} \label{lemma:rw-iff-signal}
     A step of size 1 is taken on the random walk if and only if the agent revealed their signal. 
    \end{lemma}
    
    If (and only if, per the lemma above) the agent playing at a given time reveals their signal as a result of the government subsidy, they take a step on the random walk. 
    Since there could be a range of subsidies that are signal-revealing at a given location on the random walk for a value of $p\,,$ there may be multiple values the government could pick that would cause signal-revealing. Let these values be $\{ v_j \}_{j = 1}^m\,.$ Then, $\alpha_i \coloneqq \sum_{j} \Pr{v_j}\,.$ Now, with probability $\alpha_i\,,$ the signal is revealed and a step is taken, and with probability $1-\alpha_i\,,$ the signal is not revealed and the walk stays in the same state. If the signal is revealed, then as before, the walk proceeds from $i$ to $i+1$ with probability $p$ and from $i$ to $i-1$ with probability $1-p\,.$
    
\end{proof}

\begin{rmk}
    If a group for whom the correct action is A / right reaches $+L > 2\,,$ then they are in the correct cascade for them. Similarly, if B / left is the correct action for a group and they reach $-L < -2$ in the random walk, then they are in the correct cascade for them.
\end{rmk}

    We showed in the previous section in Lemma~\ref{lem:stop} that $L=2$ actually suffices for pushing people into the cascade. However, in this setting, because the government provides the signal-revealing subsidy even after one group reaches $L=2\,,$ it may be possible to get a ``bad'' string of signals that push a group back toward the pessimism trap. Thus, we need to sustain it for longer. We show in subsequent lemmas exactly for how much longer we need to apply it.

Next, we analyze the simplified random walk $\mathcal{R}$ (Lemmas~\ref{lemma:prl-minusl}, \ref{lemma:whptrw}) and then show that we can use the analysis of $\mathcal{R}$ to study $\mathcal{R}_\text{gvt}$ (Lemma~\ref{lemma:analyse-rw-aug}).

\begin{lemma} \label{lemma:prl-minusl}
    In random walk $\mathcal{R}$ as defined in Definition~\ref{defn:rw-aug}, the probability that when starting at location $i\,,$ the walk hits $\Lrw$ before hitting $-\Lrw$ is given by:
    \begin{align*}
    x_i &= \Pr{\text{walk hits } \Lrw \text{ before hitting } -\Lrw \text{ when starting at } i}  \\
    & = \frac{2p-1}{(1-p)\, \left( \alpp^{-L} - \alpp^{L}  \right)} \sum_{j = -\Lrw + 1}^{i} \alpp^j\,.
    \end{align*}
    Thus, in order for $x_0 \ge 1-\delta\,,$ i.e., for the probability of hitting the correct cascade before the incorrect one when the walk starts from 0 to be high, we need $L = \frac{\log(1/\delta - 1)}{\log(p/(1-p))}\,.$ 
\end{lemma}

\begin{proof}
    We can analyze this by means of a recurrence. For $x_i$ defined as in the lemma statement, we have that:
    \begin{align*}
        x_i &= (1-p_\text{min}) \, x_i + p_\text{min} \, p \, x_{i + 1} + p_\text{min} \, (1-p) \, x_{i - 1} \\
        \Leftrightarrow x_i &=  p \, x_{i + 1} +  (1-p) \, x_{i - 1} \\
        \Leftrightarrow p \, ( x_i - x_{i + 1}) &= (1-p) \, (x_{i -1} - x_{i}) \\
    \end{align*}
    Now, we write down the boundary conditions: $x_{-\Lrw} = 0$ and $x_{\Lrw} = 1\,.$ Next, we solve the dynamics:
    \begin{align}
        \text{Define } y_i \coloneqq x_{i} - x_{i-1}
        &\Rightarrow (1-p) \, y_{i} = p \, y_{i+1}\\
        \text{Solution } \qquad \quad \qquad y_i &= c \left(\frac{1-p}{p}\right)^i \\
        x_\Lrw - x_{-\Lrw} = 1 &= \sum_{i = -\Lrw + 1}^{\Lrw} y_i = \sum_{i = -\Lrw + 1}^{\Lrw} c \alpp^i \\
        \Leftrightarrow c &= \frac{1}{\sum_{i = -\Lrw + 1}^{\Lrw} \alpp^i} = \frac{1 - \frac{1-p}{p}}{\alpp^{-L + 1} \left(1 - \alpp^{2L + 2} \right) } \\
        &= \frac{\frac{2p-1}{p}}{\alpp\, \left( \alpp^{-L} - \alpp^{L}  \right)}
    \end{align}

    Plugging this back into the definition of $y_i\,,$ we get:
    \begin{align}
        x_i = x_{i - 1} + \frac{2p-1}{(1-p)\, \left( \alpp^{-L} - \alpp^{L}  \right)} \, \alpp^i
    \end{align}

    Finally, let us incorporate the initial condition $x_{-\Lrw} = 0\,.$ Then, $$x_{-\Lrw + 1} = \frac{2p-1}{(1-p)\, \left( \alpp^{-L} - \alpp^{L}  \right)} \, \alpp^{-\Lrw + 1}\,,$$ and in general:
    $$
    x_i = \frac{2p-1}{(1-p)\, \left( \alpp^{-\Lrw} - \alpp^{\Lrw}  \right)} \sum_{j = -\Lrw + 1}^{i} \alpp^j\,.
    $$
    Note that when $ i = \Lrw\,,$ indeed $x_\Lrw = 1\,.$

    Finally, we compute $x_0 = \frac{2p-1}{(1-p)\, \left( \alpp^{-\Lrw} - \alpp^{\Lrw}  \right)} \sum_{j = -\Lrw + 1}^{0} \alpp^j\,$ and set this to be at least $1-\delta$ so we can determine what the appropriate threshold $L$ is. In particular:
    \begin{align}
        x_0 &= \frac{2p-1}{(1-p)\, \left( \alpp^{-\Lrw} - \alpp^{\Lrw}  \right)} \left( \frac{\alpp^{-\Lrw + 1} - \alpp}{1 - \alpp}  \right) \\
        &= \frac{\alpp^{-\Lrw} - 1}{\alpp^{-\Lrw} - \alpp^\Lrw} = \frac{1}{1 + \alpp^{\Lrw}} \\
        \text{want } \quad &\ge 1-\delta\,.
    \end{align}
    Solving, we get $L \ge \frac{\log(1/\delta - 1)}{\log(p/(1-p))}\,.$

\end{proof}

\begin{lemma} \label{lemma:whptrw}
    After $T = \frac{2 L}{\pmin \, (2p - 1)} + \frac{2 \log(1/\delta)}{\pmin^2 \, (2p-1)^2}$ steps of this random walk, with probability at least $1-\delta$, the walk will have moved at least $L\,$ steps net to the right.
\end{lemma}

\begin{proof}
    Let us define the outcome of step $i$ in the following way:
    $$
    X_i \coloneqq \begin{cases} +1 & \text{ step right, i.e., w.p. } \, p_\text{min} \, p \\
    0 & \text{ step right, i.e., w.p. } \, 1-p_\text{min} \\
    -1 & \text{ step right, i.e., w.p. } \, p_\text{min} \, (1-p)   \end{cases}\,.
    $$
    We are interested in upper bounding $\Pr{\sum_{i = 1}^T X_i \le \Lrw}\,.$ To do so, we will apply a Hoeffding bound. In particular, we have that $\mathbb{E}\left[ X_i  \right] = - \pmin + \pmin \, p + \pmin \, p = \pmin (2 \, p - 1)\,$ and so $\mathbb{E}\left[ \sum_{i = 1}^T X_i  \right] = \pmin \, T \, (2\, p - 1)\,.$ Applying the Hoeffding bound:
    \begin{align}
        \Pr{\sum_{i = 1}^T X_i \le \mathbb{E}\left[ \sum_{i = 1}^T X_i  \right] - \left(\mathbb{E}\left[ \sum_{i = 1}^T X_i  \right] - \Lrw \right)} &\le \exp \left( \frac{-2 \left( \Lrw -  \pmin \, T \, (2\, p - 1)  \right)^2}{T \cdot 4}    \right) \\
        &= \exp \left( \frac{-2 \left( \Lrw -  \pmin \, T \, (2\, p - 1)  \right)^2}{4 T}    \right)\,.
    \end{align}
    Finally, we can solve the following quadratic in $T$ to show that the stated value of $T$ suffices:
    \begin{align}
            4 \log(1/\delta)\,T &\le 2 \Lrw^2 - 4 \Lrw \, \pmin \, (2p -1) \, T \,+ 2\,\pmin^2\, T^2 \, (2p - 1)^2 \\
            T &\ge  \max \bigg \{   \frac{-(- 4 \Lrw \, \pmin \, (2p -1) - 4 \log(1/\delta))}{4\,\pmin^2\,(2p - 1)^2} \pm \\
            & \qquad \frac{\sqrt{(- 4 \Lrw \, \pmin \, (2p -1) - 4 \log(1/\delta))^2 - 4(2\,\pmin^2\,(2p - 1)^2)(2 \Lrw^2)}}{4\,\pmin^2\,(2p - 1)^2}   \bigg \} \\
            \Leftarrow T &\ge \frac{2 \left( 4 \Lrw \, \pmin \, (2p -1) + 4 \log(1/\delta) \right) }{4\,\pmin^2\,(2p - 1)^2}   \\
            &= \frac{2L}{\pmin \, (2p - 1)} + \frac{2 \log(1/\delta)}{\pmin^2 \, (2p-1)^2}
    \end{align}

\end{proof}

We show that we can analyze $\mathcal{R}_\text{gvt}$ by analyzing the simpler-to-analyze $\mathcal{R}\,.$

\begin{lemma} \label{lemma:analyse-rw-aug}
    We can analyze the random walk $\mathcal{R}_\text{gvt}$ described in Lemma~\ref{lemma:random-walk-gvt}, i.e., the one reflecting the dynamics of the game in the presence of the described government subsidy, by instead analyzing the random walk $\mathcal{R}\,$ given in Definition~\ref{defn:rw-aug}. In particular, the probability $x_i = \Pr{\text{walk hits } \Lrw \text{ before hitting } -\Lrw \text{ when starting at } i}$ is the same in both random walks,  $\mathcal{R}\,, \mathcal{R}_\text{gvt}\,.$ Further, let $T_\mathcal{R}$ be the number of steps needed to be taken in $\mathcal{R}$ for the walk to hit $L\,$ with high probability. Suppose $T_{\mathcal{R}_\text{gvt}}$ is the number of steps needed to achieve the same in $\mathcal{R}_\text{gvt}$. Then, $T_{\mathcal{R}_\text{gvt}} \le T_\mathcal{R}\,.$
\end{lemma}

\begin{proof}
    We consider the two parts separately.

    \textbf{Probability} Let us write the recurrence equations for each of the random walks. First, for $\mathcal{R}_{\text{gvt}}$:
    $$
    x_i = (1-\alpha_i)\, x_i + \alpha_i\, (1-p)\, x_{i - 1} + \alpha_i\, p\, x_{i + 1}\,.
    $$
    And for $\mathcal{R}\,:$
    $$
    x_i = (1-p_\text{min})\, x_i + p_\text{min} \,(1-p)\, x_{i - 1} + p_\text{min}\, p\, x_{i + 1}\,.
    $$
    Now, observe that subtracting the first term on the right hand side from both sides we get, respectively:
    \begin{align*}
        \alpha_i\, x_i &= \alpha_i\, (1-p) \,x_{i - 1} + \alpha_i\, p\, x_{i + 1} \\
        p_\text{min}\, x_i &= p_\text{min}\, (1-p)\, x_{i - 1} + p_\text{min}\, p\, x_{i + 1}\,.
    \end{align*}
    This can be simplified to:
    \begin{align*}
        x_i &= (1-p)\, x_{i - 1} +  p \, x_{i + 1} \\
         x_i &= (1-p)\, x_{i - 1} + p \, x_{i + 1}\,.
    \end{align*}
    These equations are the same, meaning that the dynamics are the same. If the boundary conditions are the same, then the values will also be the same.

    \textbf{Time Required To Converge} For this, we simply observe that the relationship between $\alpha_i$ in $\mathcal{R}_\text{gvt}$ and $\pmin$ in $\mathcal{R}$ is $\alpha_i \ge \pmin \, \forall i\,.$ Let us consider the instances where a step is taken in the original walk $\mathcal{R}_\text{gvt}$ as compared to where steps are taken in the analyzed walk $\mathcal{R}\,.$ Define $D_i \coloneqq \mathbb{I}\left[ \text{ step taken in } \mathcal{R}  \right]\,.$
    Define:
    $$
    \tilde{D}_i = \begin{cases}
    1 & \text{if }\,D_i = 1 \\
    1 & \text{ w.p. } \frac{\alpha_i - \pmin}{1-\pmin} \text{ if } \, D_i = 0 \\ 
    0 & \text{ otherwise.}
    \end{cases}
    $$
    
    Note that $\tilde{D}_i$ is $1$ with probability $\alpha_i\,$ and 0 with probability $1-\alpha_i\,.$ 
    Additionally, due to the coupling, $T_\alpha \coloneqq \sum_{i = 1}^T \tilde{D}_i \ge \sum_{i = 1}^T D_i \eqqcolon T_{\pmin} \,.$ 
    As a result, there are strictly more steps taken in $\mathcal{R}_\text{gvt}$ than in the $\mathcal{R}\,.$ Conditioned on taking a step, the probabilities of right and left steps are the same.  
    Thus, for a threshold of interest $Q\,,$ the probability of not exceeding it after $T_\alpha$ steps is smaller than the probability of not exceeding it after $T_{\pmin}$ steps.
   
\end{proof}

\subsubsection{Proof of Theorem~\ref{thm:kgroup}}
Finally, we present the proof of the main theorem, Theorem~\ref{thm:kgroup}.

\begin{proof}

\newcommand{\green}[1]{\textcolor{green}{#1}}

We start by fixing a group to look at. We show that the subsidy behaves, as before, as a random walk that group takes on the number line and show how long it takes to achieve with high probability a sufficient condition for the walk to finish in an up cascade. We then union bound over the failure probability and appropriately scale the time required to get the stated result.

Thus, let us start by fixing a group. Suppose for this group that $A$ is the correct choice (``right'' is the direction for an up cascade). All of our subseequent arguments hold symmetrically in the case that $B$ is the correct choice (``left'' is the direction for an up cascade). By Lemma~\ref{lemma:random-walk-gvt}, we have the description of a random walk $\mathcal{R}_\text{gvt}$ that models the setting in Definition~\ref{defn:multigroup-game} with the government subsidy described in Definition~\ref{defn:government-subsidy}. Further, Lemma~\ref{lemma:analyse-rw-aug} establishes that for the quantities of our interest, we can study $\mathcal{R}$ defined in Definition~\ref{defn:rw-aug} instead of $\mathcal{R}_\text{gvt}\,.$ 

With this having been established, let us define a tolerance parameter $\delta_0 \coloneqq \delta/(3k)$ which represents the probability of a single failure for a single group. If there is probability at most $\delta_0$ that the walk does not net $L$ steps and probability at most $\delta_0$ that even if the walk nets at least $L$ steps, it hit $-L$ first, then the total failure probability for this walk is at most $2\delta_0 = \delta/k\,.$ We want the group to take sufficiently many steps in the random walk that it exceeds a threshold $L\,,$ sending it into an up cascade. In the past, we considered $L = 2$ because the government knew the correct action and only subsidized it, but now that we're subsidizing both actions, we need to protect against hitting a bad cascade before hitting the correct one. By Lemma~\ref{lemma:whptrw}, we have that $T = \frac{2L}{\pmin \, (2p - 1)} + \frac{2 \log(1/\delta_0)}{\pmin^2 (2p-1)^2}\,$ steps suffice for netting $L$ steps to the right with probability at least $1-\delta_0 = 1-\delta/(3k)\,.$ 

Now, we do want to make sure the walk did not hit a down cascade before it hits the up cascade. For this, we analyze the probability of hitting the down cascade first in Lemma~\ref{lemma:prl-minusl}. As before, we want the probability of failure to be at most $\delta_0\,,$ and so applying Lemma~\ref{lemma:prl-minusl}, $L \ge \frac{\log(1/\delta_0 - 1)}{\log(p/(1-p))}\,.$

\newcommand{\nagentsneed}{\frac{ 2 \frac{\log(1/\delta_0 - 1)}{\log(p/(1-p))}}{\pmin \, (2p - 1)} + \frac{2 \log(1/\delta_0)}{\pmin^2 (2p-1)^2}}

Thus, with probability at least $1-2\delta_0 = 1-2\delta/(3k)\,,$ this group will end in an up cascade after $\nagentsneed$ steps (i.e., agents from this group are seen) without any negative side effects. 

Next, we analyze how long it takes to see enough agents from this group with high probability. An agent is from group $j$ with probability at least $g_\text{min}\,,$ and so with probability at least $1-\delta_0\,,$ after $\frac{2T^\star + 2\log(1/\delta_0)}{g_\text{min}}$ agents, we have seen agents from that group at least $T^\star$ times. Thus, we have that with probability at least $1-\delta/k\,,$ after $\frac{2\left( \nagentsneed \right) + 2\log(3k/\delta)}{g_\text{min}}$ steps, with high probability, for a fixed group, we have (1) seen enough agents from that group to (2) see enough ``correct'' signals for that group and (3) not enter a bad cascade first. Finally, we union bound over the $k$ groups to get that with probability at least $1-\delta\,,$ $k$ times this number of steps suffices for {\em all} groups to reach positive cascades.

\end{proof}

\onecolumn
\section{Extra Experiments}

In this section, we provide additional plots from our experiments. 

\subsection{Proportion Correct Cascades}
\begin{figure}[ht!]
    \centering
    \includegraphics[width=0.75\textwidth]{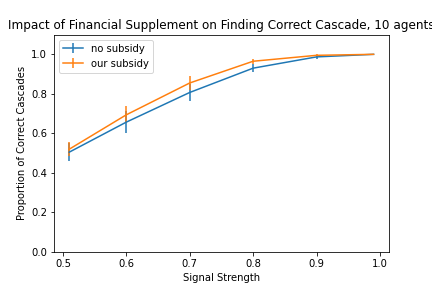}
    \caption{Proportion Correct Cascades for 10 Agents}
    \label{fig:subsidy_10_agents}
\end{figure}

\begin{figure}[ht!]
    \centering
    \includegraphics[width=0.75\textwidth]{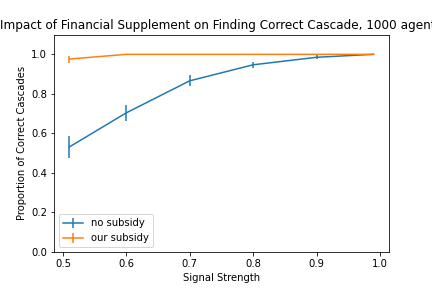}
    \caption{Proportion Correct Cascades for 1000 Agents}
    \label{fig:subsidy_1000_agents}
\end{figure}

\subsection{Subsidy Size}

\begin{figure}[ht!]
    \centering
    \includegraphics[width=0.75\textwidth]{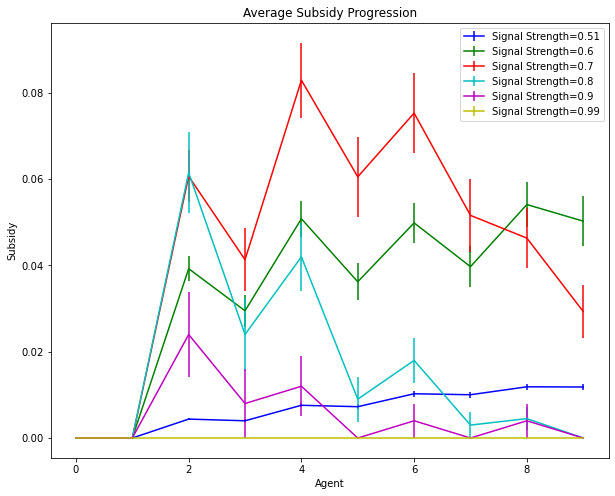}
    \caption{Average Subsidy Progression for 10 Agents}
    \label{fig:subsidy_10_agents_prog}
\end{figure}

\begin{figure}[ht!]
    \centering
    \includegraphics[width=0.75\textwidth]{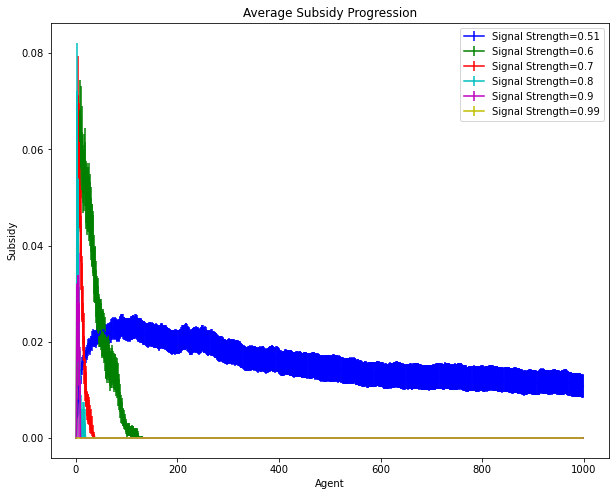}
    \caption{Average Subsidy Progression for 1000 Agents}
    \label{fig:subsidy_1000_agents_prog}
\end{figure}

\end{document}